\documentclass[conference]{IEEEtran}
\IEEEoverridecommandlockouts
\usepackage{cite}
\usepackage{amsmath,amssymb,amsfonts, amsthm}
\usepackage{algorithm}
\usepackage[noend]{algorithmic}
\usepackage{graphicx}
\usepackage{subfigure}
\usepackage{float}
\usepackage{subfloat}
\usepackage{textcomp}
\usepackage{xcolor}
\usepackage{float}
\usepackage{footnote}
\bibliographystyle{IEEEtran}
\def\BibTeX{{\rm B\kern-.05em{\sc i\kern-.025em b}\kern-.08em
    T\kern-.1667em\lower.7ex\hbox{E}\kern-.125emX}}

\newtheorem{asu}{Assumption}
\newtheorem{defn}{Definition}
\newtheorem{lem}{Lemma}
\newtheorem{rem}{Remark}

\newcommand\blfootnote[1]{%
  \begingroup
  \renewcommand\thefootnote{}\footnote{#1}%
  \addtocounter{footnote}{-1}%
  \endgroup
}

\begin{document}

\title{Connectivity-Aware UAV Path Planning with Aerial Coverage Maps}

\author{\IEEEauthorblockN{Hongyu Yang$^ \ast$, Jun Zhang$^ \dag$, S.H. Song$^ \ast$, and Khaled B. Lataief$^ \ast$, \textit{Fellow, IEEE} }
\IEEEauthorblockA{$^ \ast$Department of ECE, Hong Kong University of Science and Technology, Hong Kong \\ $^ \dag$Department of EIE, The Hong Kong Polytechnic University, Hong Kong}
Email: $^ \ast$\{hyangbb, eejzhang, eeshsong, eekhaled\}@ust.hk, $^ \dag$jun-eie.zhang@polyu.edu.hk
}

\maketitle

\begin{abstract}

    Cellular networks are promising to support effective wireless communications for unmanned aerial vehicles (UAVs), which will help to enable various long-range UAV applications. However, these networks are optimized for terrestrial users, and thus do not guarantee seamless aerial coverage. In this paper, we propose to overcome this difficulty by exploiting controllable mobility of UAVs, and investigate connectivity-aware UAV path planning. To explicitly impose communication requirements on UAV path planning, we introduce two new metrics to quantify the cellular connectivity quality of a UAV path. Moreover, aerial coverage maps are used to provide accurate locations of scattered coverage holes in the complicated propagation environment. We formulate the UAV path planning problem as finding the shortest path subject to connectivity constraints. Based on graph search methods, a novel connectivity-aware path planning algorithm with low complexity is proposed. The effectiveness and superiority of our proposed algorithm are demonstrated using the aerial coverage map of an urban section in Virginia, which is built by ray tracing. Simulation results also illustrate a tradeoff between the path length and connectivity quality of UAVs.

\end{abstract}

\section{Introduction}
    \blfootnote{This work was supported by the Hong Kong Research Grants Council under Grant No. 16209418.}

    \vspace{-0.3cm}
     \looseness=-1
     Unmanned aerial vehicles (UAVs) are becoming increasingly important in various civilian applications \cite{b1}. For their effective operation, wireless communications between UAVs and ground control stations (GCSs) is essential to transmit information such as flight status, control commands and sensing messages. As cellular networks have the advantages of providing wide-area, high-throughput, reliable and secure communications, connecting UAVs with cellular technology has received significant attention from both academia and industry \cite {b3,b2,b5}. With effective cellular connectivity, we can expect to witness an increased usage of UAVs in various long-range applications (e.g., cargo delivery, search-and-rescue, etc.) \cite{b6}.

     Nevertheless, it is challenging to maintain effective communications for cellular-connected UAVs in long-range applications, as cellular networks are not optimized for aerial coverage. In particular, \textit{aerial coverage holes} are scattered throughout the sky, where a UAV's cellular connectivity can be disrupted due to the weak received signal strength from ground base stations (GBSs) \cite{b4}. To address this challenge, exploiting the controllable mobility of a UAV via careful path planning is an attractive solution \cite{b3}. In particular, UAV path planning for long-range applications should minimize the UAV's flying distance to guarantee timely arrival at its designated location. Therefore, it is reasonable for a UAV to fly over some aerial coverage holes in pursuit of a shorter path. Meanwhile, specific communication requirements must be satisfied during a UAV's mission flight, in order to prevent it from losing contact with GBSs due to frequent or long-lasting connectivity outages. A cellular-connected UAV should, therefore, carefully plan its flying paths for long-range missions in a ``connectivity-aware'' manner; i.e., we need to minimize the path length while maintaining effective and reliable cellular connectivity.

    Recently, there have been many studies on UAV-assisted communications, where UAVs serve as base stations or relays and their trajectories are optimized to enhance communication services for terrestrial users \cite{b9}. For cellular-connected UAVs, path planning should, instead, focus on UAV's own mission-specific performance and communication quality along its path. Unfortunately, this problem has not been well studied. The method proposed in \cite{b12} jointly optimizes a UAV's path length, communication latency and interference, but the authors did not consider the cellular connectivity constraints. Continuous connectivity of the UAV with one of the GBSs was enforced in \cite{b13}. However, this is impractical and also unnecessary for long-range UAV applications, given the scattered aerial coverage holes. Although the authors of \cite{b14} considered allowing UAV's temporary disconnection from cellular networks, their algorithm cannot guarantee effective communications during the UAV's flight. Moreover, the cellular coverage models in previous works are oversimplified by assuming line-of-sight (LoS) propagation from GBSs to UAVs, which is not always available in practice.

     In this paper, we investigate connectivity-aware path planning for cellular-connected UAVs in long-range applications. We introduce two new metrics to quantify the cellular connectivity quality of a UAV path. Given these two metrics, communication requirements can be explicitly enforced on UAV path planning, and by adjusting the constraints we can achieve different tradeoffs between the path length and connectivity quality. Additionally, aerial coverage maps are used to provide coverage hole locations, which exploit fine-grained building geometry in modeling ground-to-air propagation. We formulate the path planning problem as finding the shortest path given connectivity constraints, which, however, is NP-hard. To deal with this difficulty, we propose a novel connectivity-aware path planning algorithm with low-complexity based on graph search methods. The aerial coverage map of an urban area in Virginia is built via ray tracing simulations to evaluate our proposed algorithm. Evaluation results show that the proposed algorithm achieves significant performance gains compared with baseline methods, and illustrate the tradeoff between the UAV's path length and connectivity quality.

\section{System Model}\label{s2}

    Fig. \ref{f1} illustrates the cellular-connected UAV communication system investigated in this work. We assume that several GBSs jointly provide cellular coverage for UAV users in a certain region, with potential aerial coverage holes. Moreover, an aerial coverage map is assumed to be available to inform UAVs of coverage hole locations. Under this setting, we consider a UAV is flying at a constant altitude of $H$ meters above the ground, and it is appointed to fly from a source location to a destination location.

    \vspace{-0.1cm}

    \subsection{UAV Motion Model}

        For ease of exposition, we make the following mild assumptions to characterize UAV's dynamics,

      \vspace{-0.2cm}

        \begin{asu}\label{a1}
            (State space discretization): \upshape{The horizontal mobility space of the UAV is discretized into $N$ and $M$ intervals along the $x$ and $y$ axes, respectively, yielding $N \times M$ equally-sized rectangular grids. Each grid unit $\mathbf{u}$ has a set of integer coordinates of the form $(i,j)$, and the set for all grid unit coordinates is denoted as $\mathcal{G} = \{(i,j) | 0 \leqslant i \leqslant N-1, 0 \leqslant j \leqslant M-1, i, j \in \mathbb{Z}\}$. Similar assumptions were made in \cite{b12}.}
        \end{asu}
      \vspace{-0.3cm}
        \begin{asu}\label{a2}
            (Action space discretization): \upshape{The UAV takes discrete steps in one of eight directions (forward, back, left, right and the four diagonal directions) when moving from one grid unit to the neighboring grid unit. The set for all steps is denoted as \small $\mathcal{A} = \{ (0,1), (0,-1), (1,0), (-1,0), (1,1), (1,-1), (-1,1), (-1,-1)\}$.}
        \end{asu}

       Let $\mathbf{u}_k$ denote the coordinates of the grid unit at which the UAV stays at the $k^{th}$ state during its movement. Accordingly, the UAV dynamic model can be formulated as the following state transition equation:

       \vspace{-0.4cm}

        \begin{equation}
          \mathbf{u}_{k+1} = \mathbf{u}_k +a, \quad \mathbf{u}_k, \mathbf{u}_{k+1} \in \mathcal{G}, a \in \mathcal{A}. \label{e1}
            \vspace{-0.1cm}
        \end{equation}
        Thus, the UAV flying path $\mathbf{p}$ is determined by a sequence of $K+1$ two dimensional state-tuples,

        \vspace{-0.6cm}

        \begin{gather}\label{e2}
          \textbf{p} = < \mathbf{u}_k | \mathbf{u}_k \in \mathcal{G}, k = 0,1,2 \ldots K >,
        \end{gather}
        where $K$ is the total number of steps that the UAV needs to take to reach $\mathbf{u}_K$ from $\mathbf{u}_0$. Moreover, the UAV trajectory needs to satisfy the following constraints:

        \vspace{-0.6cm}
        \begin{gather}
          \mathbf{u}_0 = (i_s, j_s), \mathbf{u}_K = (i_d, j_d), \label{e3}\\
          \frac{\|\mathbf{u}_k - \mathbf{u}_{k-1}\|}{\Delta t_{k,k-1}} \leqslant V_{max}, k = 1,2 \ldots K, \label{e5}
            \vspace{-0.6cm}
        \end{gather}
        where $\Delta t_{k,k-1}, V_{max}, (i_s, j_s)$ and $(i_d, j_d)$ denote, respectively, the time duration of the UAV transition from the $k^{th}$ state to the $(k+1)^{th}$ state, maximum UAV velocity and grid coordinates of the initial and final locations, with $\| \cdot \|$ representing the Euclidean distance. While constraint \eqref{e3} restricts the UAV to flying between a given source-destination pair, constraint \eqref{e5} enforces a maximum speed requirement. As \eqref{e5} introduces a rather complex space-time constraint, we further make the following constant speed assumption for tractability.

        \vspace{-0.2cm}

        \begin{asu}\label{a3}
            (Constant speed): \upshape{The UAV is assumed to fly at a constant speed of $V_{const} (V_{const} \leqslant V_{max}) \text{ m/s}$ (as done in \cite{b12,b13}).}
        \end{asu}

        \begin{figure}[!t]

        \centering
        \includegraphics[height=2.2in]{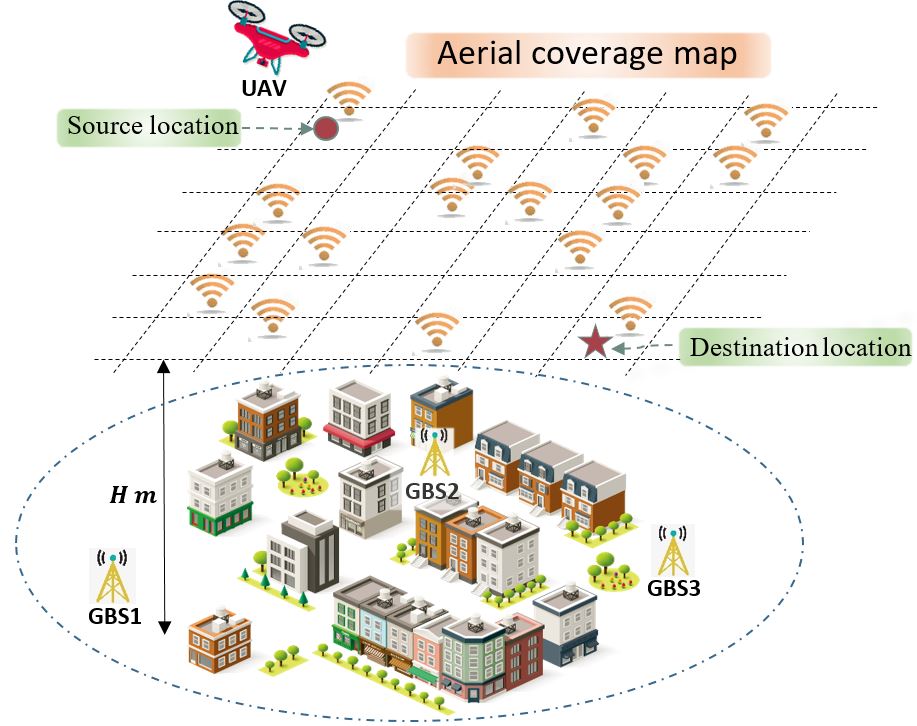}\\
        \caption{A cellular-connected UAV communication system.}
        \vspace{-0.3cm}
        \label{f1}
    \end{figure}

    \vspace{-0.3cm}

    \subsection {Aerial Coverage Map Model }
        \looseness=-1
        To assist path planning, we resort to aerial coverage maps, which can help UAVs to avoid ``holes-in-the-sky'' \cite{b7}. The superiority of the map-based approach has been verified in the positioning of a UAV relay  \cite{b15} and trajectory optimization for UAV base stations  \cite{b16}. In our study, a height- and scenario- dependent 2-D aerial coverage map is used, which characterizes whether each location in the UAV flight plane is covered by cellular networks. In particular, we use Wireless InSite$^\circledR$, a ray tracing simulator, to simulate a received signal power map in an urban environment and then generate an aerial coverage map. Detailed illustrations will be provided in Section \ref{s5}.

        We represent an aerial coverage map as a binary matrix $\mathbf{M} \in \{0,1 \} ^{N\times M}$. In particular, we define $m_{ij} = 1$ to indicate that grid unit $\mathbf{u} = (i,j) \in \mathcal{G}$ is under cellular network coverage, and otherwise $m_{ij} = 0$. Furthermore, for a given UAV path $\mathbf{p}$, we define a $(K+1)- \text{dimensional}$ binary sequence to indicate whether the UAV is under cellular coverage at each of its $K+1$ states, i.e.,
        \vspace{-0.2cm}
        \begin{gather} \small
          \mathbf{c}^\mathbf{p} = <c^\mathbf{p}_k = m_{ij} | (i,j) = \mathbf{u}_k , \mathbf{u}_k \in \mathbf{p}, k = 0, 1, \ldots K>.
        \end{gather}

\section{Connectivity Quality Metrics and Problem Formulation}\label{s3}
        In this section, two new metrics are firstly introduced to quantify the cellular connectivity quality of a UAV path. The connectivity-aware path planning is then formulated as a problem of finding the shortest path subject to constraints on the cellular connectivity quality.
        \begin{figure}[!t]
            \centering
            \includegraphics[height=1.9in]{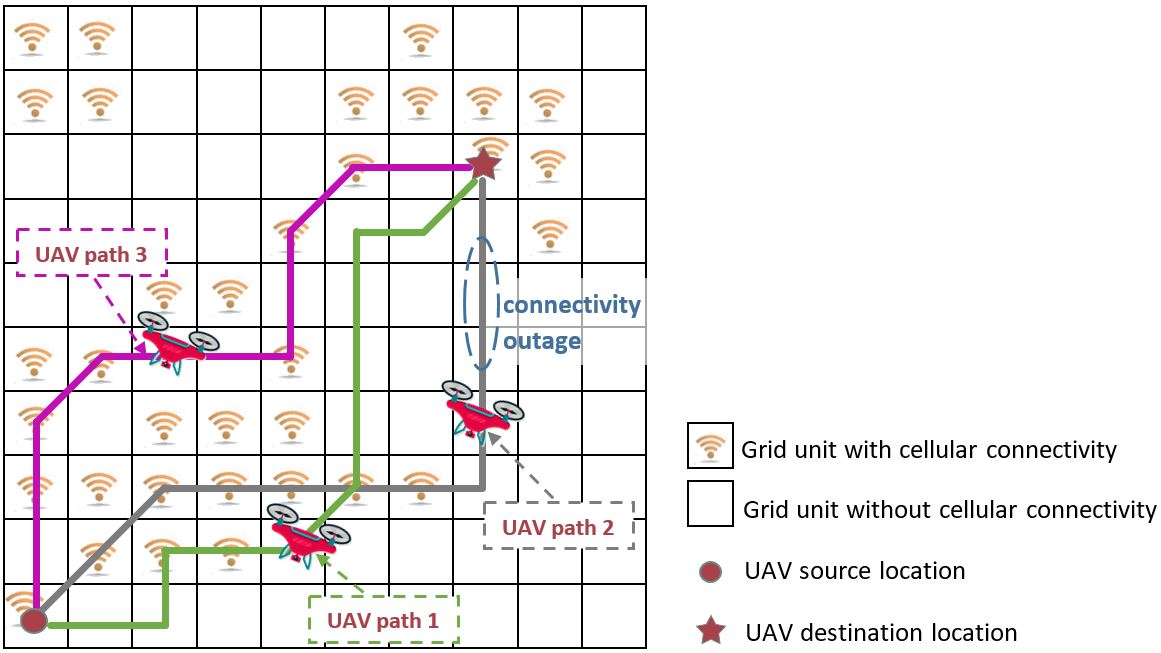}\\
            \caption{An illustration of UAV paths of the same length but with different cellular connectivity quality.}
            \label{f3}
            \vspace{-0.3cm}
        \end{figure}
    \vspace{-0.2cm}
    \subsection{Metrics for Cellular Connectivity Quality} \label{s3_s}

        \looseness=-1
        Consider the example in Fig. \ref{f3}. Obviously, UAV path 1 experiences worse cellular connectivity quality than path 2 and path 3, due to the higher frequency of connectivity outages. In this regard, we introduce the first performance metric to evaluate how often a connectivity outage happens in a UAV path.

        \begin{defn}
            \upshape{The connectivity outage ratio (COR) is defined as the percentage of the grid units that exhibit aerial coverage holes along a UAV path. Specifically, given a coverage indicator sequence $\mathbf{c}^\mathbf{p}$ associated with the UAV path $\mathbf{p}$, the COR is calculated by
            \vspace{-0.3cm}
            \begin{equation}\label{e7}
              COR _\mathbf{p} [\%] = 1 - \frac{\sum _{k=0}^K \mathbf{c}_k^\mathbf{p}}{K+1}  .
            \end{equation}
            }
        \end{defn}

        \begin{rem}\label{r1}
            \looseness=-1
            \upshape{The COR captures the availability of cellular communications along the UAV path. In other words, if a UAV initializes a communication request randomly during its mission flight, the COR represents the probability that the UAV's cellular connectivity will be disrupted and unable to serve this request. }
                \vspace{-0.2cm}
        \end{rem}

        \looseness=-1
        Next, we note that although with the same COR value, intuitively, the connectivity quality of UAV path 3 in Fig. \ref{f3} is better than that of path 2, as the UAV in path 2 needs to stay longer in one uncovered area. Based on this observation, we define the second performance metric, a measurement of how long each connectivity outage lasts in a UAV path.

        \begin{defn}
            \upshape{The connectivity outage duration (COD) refers to the length of the consecutive coverage holes on the UAV's path.\footnote{Note that a UAV path may include several COD values, as a UAV may fly through aerial coverage holes intermittently.} Given a coverage indicator sequence $\mathbf{c}^\mathbf{p}$ associated with the UAV path $\mathbf{p}$, we define a set of CODs on this path as $\{ COD_\mathbf{p} ^{(\ell)} : \ell = 1,2,\ldots, L\}$, where $L$ represents the total number of times that a connectivity outage happens. $COD_\mathbf{p} ^{(\ell)}$ is calculated by
            \vspace{-0.5cm}
              \begin{gather}\small \label{e8}
                COD_\mathbf{p} ^{(\ell)} = \sum _{k = i^ {(\ell)}}^ {j^ {(\ell)}} \| \mathbf{u}_k - \mathbf{u}_{k-1} \|, \notag\\
                \text{ if } \sum _{k = i^ {(\ell)}}^ {j^ {(\ell)}} c_k^\mathbf{p} = 0, 1\leqslant i^ {(\ell)} \leqslant j^ {(\ell)} \leqslant K  .
              \end{gather}
             }
        \end{defn}

        \begin{rem}\label{r2}
            \upshape{The COD is a metric related to the communication latency in cellular-connected UAV communications. Specifically, if a UAV attempts to send data to its GCS while its cellular connectivity is in an outage, data transmission has to be delayed until the UAV reconnects to cellular networks. In this case, the delay is upper bounded by the duration of the connectivity outage. }
        \end{rem}
    \vspace{-0.3cm}
    \subsection{Connectivity-Aware Path Planning Problem}

    In this paper, we investigate cellular-connected UAVs for long-range applications, and consider a connectivity-aware path planning problem. We aim to minimize the path length to save energy and reduce the mission completion time, while maintaining effective UAV communications during the UAV's mission flight. The COD and COR metrics are adopted to enforce communication constraints for UAV path planning. Consequently, the connectivity-aware path planning is formulated as a problem of finding the shortest path under constraints on the value of COR and COD:
    \begin{align}
      \mathcal{P}_1: \quad \min _\mathbf{p} \quad & \sum_{k=1}^{K}\| \mathbf{u}_{k} - \mathbf{u}_{k-1} \|, \label{e10}  \\
      s.t. \quad & \eqref{e1}, \eqref{e3}, \notag \\
      & COD_\mathbf{p} ^{(\ell)} \leqslant d, \quad \forall \ell = 1,2,\ldots, L, \label{e13}\\
      & COR_\mathbf{p} \leqslant r \label{e14},
      \vspace{-0.2cm}
    \end{align}
    where $d$ $(d\geqslant 0)$ and $r$ $(0\leqslant r < 1)$ denote, respectively, the maximum tolerant COD and COR of a UAV path. Thus, constraints \eqref{e13} and \eqref{e14} ensure that the cellular connectivity quality of a UAV's path satisfies the designated requirements.\footnote{\looseness=-1 The connectivity constraints can be adjusted according to different targeted communication performance in different UAV applications. Some quantitative communication requirements in civil UAV applications are provided in \cite{b6}.}

    Problem $\mathcal{P}_1$ is a constrained shortest path problem, which is generally NP-hard. In particular, the main difficulty is introduced by the two connectivity constraints. In the next section, we will firstly propose two methods to handle the COD and COR constraints, respectively, and then develop a low-complexity algorithm to find a heuristic solution to $\mathcal{P}_1$.

\section{Proposed Path Planning Algorithms}\label{s4}

    In this section, we present a heuristic algorithm for problem $\mathcal{P}_1$ based on graph search methods. We start by defining an undirected graph $G = (V, E)$, where the node set $V$ represents all the grid units in the UAV state space, and the edge set $E$ represents their connections. Hence, we have
    \begin{align}
      V &= \mathcal{G}, \label{e15}\\
      E &= \{e=(\mathbf{u},\mathbf{v}) | \mathbf{u},\mathbf{v} \in V, \mathbf{v} = \mathbf{u} + a, a \in A\}. \label{e15_1}
    \end{align}
    We also associate each edge $e \in E$ with a weight $l(e)$. In order to integrate the cost of the UAV flying distance into the graph representation, we define $l(e)$ as
    \begin{equation}\label{e16}
      l(e) =
      \begin{cases}
	       d_1 & \text{if } \mathbf{v} = \mathbf{u} + x, x \in \mathcal{D}\\
	       d_2 & \text{if } \mathbf{v} = \mathbf{u} + x, x \in \mathcal{H},
	   \end{cases}
    \end{equation}
    where $d_1$ is the distance cost of moving vertically or horizontally on the 2-D grid, and $d_2$ is the distance of diagonal movement, with $\mathcal{D} = \{(0,1), (0, -1), (1, 0), (-1, 0)\}, \mathcal{H} = \{(1, 1), (1, -1), (-1, 1), (-1, -1)\}$ denoting two subspaces of action space $\mathcal{A}$. Next, prior to solving problem $\mathcal{P}_1$, we propose two algorithms to solve, respectively, the UAV's trajectory optimization problems subject to only the COD constraint (denoted as $\mathcal{P}_2$) and only the COR constraint (denoted as $\mathcal{P}_3$).

    \subsection{Modified A* Algorithm for $\mathcal{P}_2$}
    \looseness=-1
    Based on the classical A* shortest path algorithm \cite{b20}, we propose a modified A* search algorithm to handle the COD constraint, and thus solve problem $\mathcal{P}_2$. Specifically, in our proposed algorithm, each node $\mathbf{v} \in V$ is evaluated by the following value:
    \vspace{-0.3cm}
    \begin{equation}\label{e19}
      w(\mathbf{v}) = g(\mathbf{v}) + h(\mathbf{v}),
      \vspace{-0.2cm}
    \end{equation}
    where $g(\mathbf{v})$ is the true optimal cost from the given source node $ \mathbf{s} = (i_s, j_s)$ to the current node $\mathbf{v}$, and $h(\mathbf{v})$ is a heuristic estimate of the cost from $\mathbf{v}$  to the destination node $ \mathbf{d} = (i_d, j_d)$. In particular, the true optimal cost from an initial node to the current node $\mathbf{v}$ can be obtained by summing the edge weights over the shortest path to reach $\mathbf{v}$. Consequently, the calculation of $g(\mathbf{v})$ is given by the following iterative form:
    \vspace{-0.2cm}
        \begin{align}
          &g(\mathbf{s}) = 0 \label{e20},\\
          &g(\mathbf{v}) = g(\mathbf{u}) + l((\mathbf{u},\mathbf{v})), \label{e21}
        \end{align}
    where $\mathbf{u}$ is the last node in the shortest path from $\mathbf{s}$ to $\mathbf{v}$.
        Moreover, we adopt the octile distance as an estimate of the optimal cost from $\mathbf{v}$ to $ \mathbf{d}$, that is,
        \begin{align}\label{e22}
          h((i,j)) = & d_1 \times (|i_d - i|+|j_d - j|) + (d_2 \notag \\
           &  - 2 \times d_1) \times min((|i_d - i|,|j_d - j|)), (i,j) \in V.
        \end{align}

       \begin{algorithm}[!t]\small
		  \caption{Modified A* search algorithm for $\mathcal{P}_2$}
		  \begin{algorithmic}[1]
		      \renewcommand{\algorithmicrequire}{\textbf{Input:}}
		      \REQUIRE UAV state space $\mathcal{G}$, aerial coverage matrix $\mathbf{M}$, source coordinates $\mathbf{s}$, destination coordinates $ \mathbf{d} $ and COD threshold $d$.
              \STATE \textbf{Initialize}:  create a graph using \eqref{e15}--\eqref{e16} and a min priority queue $Q$ with $w$ as key, $\mathbf{s}.parent \gets nil$, $d_{zero} (\mathbf{s})\gets 0$, \textbf{Decrease-Key} $ (Q, \mathbf{s}, w(\mathbf{s}), d_{zero} (\mathbf{s})) $;
			  \WHILE{$Q \neq \emptyset$ }
			     \STATE $ \mathbf{u}\gets \textbf{Extract-Min} (Q)$;
                 \IF {$\mathbf{u} = \mathbf{t} $ }
			         \STATE $\mathbf{p} \gets \textbf{BackTrace}(\mathbf{u})$;
			         \RETURN SUCCESS, Path $\mathbf{p}$
                 \ENDIF
			  \STATE Mark $\mathbf{u}$ as visited.
              \FORALL{$a \in \mathcal{A}$}
			     \STATE $\mathbf{v} \gets \mathbf{u} + a$;
			         \IF {$\mathbf{v}$ not visited \AND $d_{zero}(\mathbf{v})\leqslant d $}
			             \STATE $\mathbf{v}.parent \gets \mathbf{u}$, \textbf{Decrease-Key} $ (Q, \mathbf{s}, w(\mathbf{s}), d_{zero} (\mathbf{s}))$;
                      \ENDIF
			  \ENDFOR
			  \ENDWHILE
			  \RETURN FAILURE
		  \end{algorithmic}

	    \end{algorithm}
\vspace{-0.3cm}

    \begin{algorithm}[!t]\small
		  \caption{Weight variation algorithm for $\mathcal{P}_3$}
		  \begin{algorithmic}[1]
		      \renewcommand{\algorithmicrequire}{\textbf{Input:}}
		      \REQUIRE UAV state space $\mathcal{G}$, aerial coverage matrix $\mathbf{M}$, source coordinates $\mathbf{s} $, destination coordinates $ \mathbf{d}$ and COR threshold $r$.
              \STATE \textbf{Initialize}: $\delta = \bar{\delta} + 1$, create a graph using \eqref{e15}, \eqref{e15_1}, \eqref{e27}.
              \STATE  Find the shortest path $\mathbf{p}$ from $\mathbf{s}$ to $\mathbf{d}$ on graph $G$ via A* search algorithm.
              \IF {$COR_\mathbf{p} > r$ }
                \RETURN FAILURE
              \ELSE
                \REPEAT
                    \STATE $COR_{old} \gets COR_\mathbf{p}$;
                    \STATE $\delta \gets \delta / 2$;
                    \STATE Update edge weights of $G$ using equation \eqref{e27};
                    \STATE Find the shortest path $\mathbf{p}$ from $\mathbf{s}$ to $\mathbf{d}$ on graph $G$ via A* search algorithm.
                \UNTIL {$COR_\mathbf{p} >r$}
                \RETURN SUCCESS, Path $\mathbf{p}$
              \ENDIF
		  \end{algorithmic}
	    \end{algorithm}

   Under this setting, the proposed algorithm proceeds as follows. Similar to the classical A* search, it also maintains a priority queue $Q$ where a candidate node with a lower evaluation value $w$ is given a higher priority. Traditionally, at each iteration, the $\mathbf{Extract-Min} (Q)$ method pops out the node $\mathbf{v}_{pop}$ at the front of $Q$, and then unvisited neighboring nodes of $\mathbf{v}_{pop}$ will be added into $Q$. The key to the proposed algorithm is to perform a COD constraint feasibility check before inserting a node into $Q$, so as to ensure the COD of the updated path after the current iteration does not exceed $d$. To this end, we introduce a variable $d_{zero}$ for each unvisited neighboring node $(i,j)$ of $\mathbf{v}_{pop}$. The value of $d_{zero} ((i,j))$ is assigned to be the COD introduced by adding node $(i,j)$ into the updated path, which is
   \vspace{-0.2cm}
        \begin{equation}\label{e23} \small
          d_{zero} ((i,j)) =
          \begin{cases}
	       0 & \text{if } m_{ij} = 1, \\
	       d_{zero} (\mathbf{v}_{pop}) + l((i,j), \mathbf{v}_{pop}) & \text{if } m_{ij} = 0.
	      \end{cases}
        \end{equation}
    \vspace{-0.2cm}
This algorithm is summarized in \textbf{Algorithm 1}.

\begin{figure*}[!t]
        \centering
        \subfigure[]{
        \begin{minipage}[b]{0.3\textwidth}
            \centering
            \includegraphics[width=5cm]{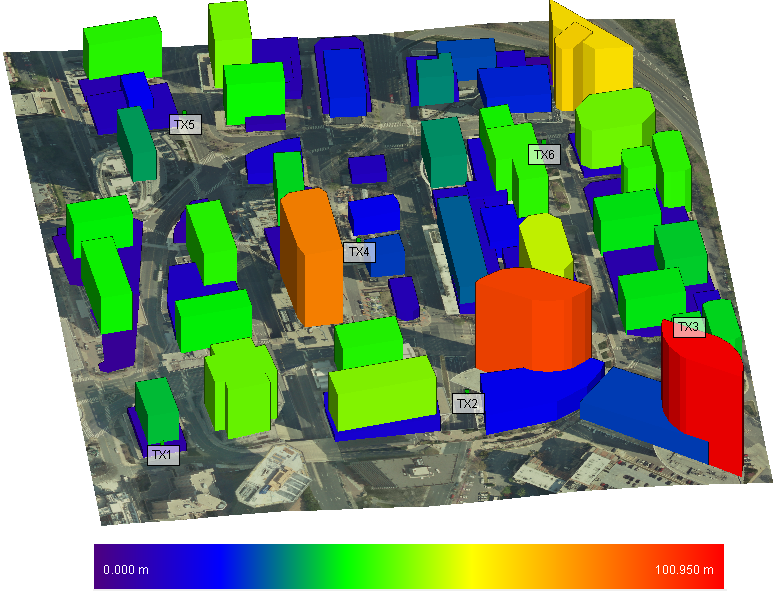}
        \end{minipage}
        }
        \subfigure[]{
        \begin{minipage}[b]{0.3\textwidth}
            \centering
            \includegraphics[width=4.1cm]{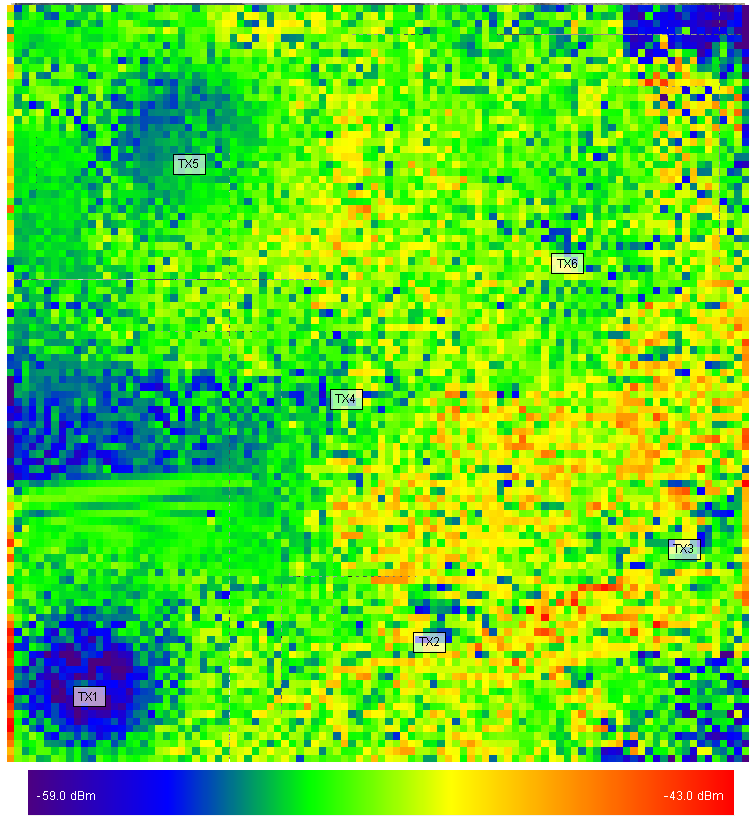}
            \end{minipage}
        }
        \subfigure[]{
        \begin{minipage}[b]{0.3\textwidth}
            \centering
            \includegraphics[width=4.3cm]{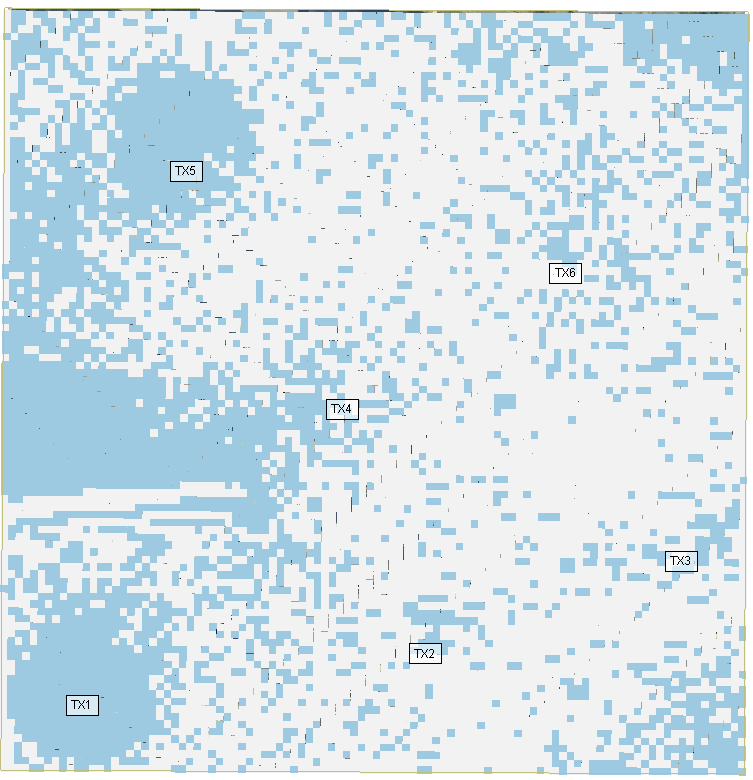}
            \end{minipage}
        }
        \caption{\looseness=-1 (a) An environment map of a section od Virginia, with different colors representing different building heights. (b) An RSP map corresponding to $100 \times 102$ discretized locations on the UAV flight plane with an altitude of $110 \ m$. (c) A cellular coverage map of the region, with blue denoting coverage holes.}
        \vspace{-0.3cm}
        \label{f4}
\end{figure*}

    \subsection{Weight Variation Algorithm for $\mathcal{P}_3$}

        The COR constraint is a global one, which makes problem $\mathcal{P}_3$ difficult to solve optimally.  As the algorithm in the previous subsection cannot be applied, we present another algorithm to handle the COR constraint and solve $\mathcal{P}_3$ efficiently.

        Clearly, including more aerial coverage holes in the path results in a higher COR value. Thus, we propose to impose a penalty to prevent the UAV from frequently visiting coverage holes. To this end, we increase the weight of edges connecting to nodes without cellular connectivity by a certain value $\delta$ $(\delta\geqslant 0)$,  i.e.,
        \vspace{-0.2cm}
        \begin{equation} \small \label{e27}
          l(e) =
            \begin{cases}
	           d_1 & \text{if } \mathbf{v} = \mathbf{u} + x,  m_{\mathbf{v}.i,\mathbf{v}.j} = 1 \text{ and }  x \in \mathcal{D},\\
               d_1 + \delta & \text{if } \mathbf{v} = \mathbf{u} + x,  m_{\mathbf{v}.i,\mathbf{v}.j} = 0 \text{ and } x \in \mathcal{D},\\
	           d_2 & \text{if } \mathbf{v} = \mathbf{u} + x , m_{\mathbf{v}.i,\mathbf{v}.j} = 1 \text{ and }  x \in \mathcal{H},\\
               d_2 + \delta & \text{if } \mathbf{v} = \mathbf{u} + x , m_{\mathbf{v}.i,\mathbf{v}.j} = 0 \text{ and } x \in \mathcal{H}.\\
	        \end{cases}
        \end{equation}

        \looseness=-1
        Additionally, in order to reduce the search complexity, we impose another requirement on the UAV path, namely, \emph{each state in the path sequence $\mathbf{p}$ must be different to others}. This implies that cycles are not allowed within the path, and thus largely reduces the search space on the graph. Meanwhile, it prevents UAVs from traversing to and from grid units with cellular connectivity to decrease the COR of its path, which generally makes no sense in practice. Therefore, although enforcing this additional requirement leads to no guarantee of finding a feasible path to satisfy the COR constraint, it is still reasonable for UAV path planning. With this requirement, the solution of problem $\mathcal{P}_3$ can be obtained by: i) assigning different values of $\delta$ to vary the edge weights on the graph; ii) for each weight variation, applying existing shortest path algorithms (e.g., the Dijkstra's algorithm and A* algorithm \cite{b20}) to search a path with the lowest cumulative weights; and iii) selecting the path that has the minimal length as well as satisfying the COR constraint among all the obtained paths. In this way, $\mathcal{P}_3$ is equivalent to finding an optimal value $\delta^ \star$ that leads to the shortest path algorithm returning the desired path. Next, we provide two lemmas which are useful to efficiently determine $\delta^ \star$.
        \begin{lem}
          \upshape{When $\delta > \bar{\delta} = N \times M$, the shortest path algorithm will find a UAV path with the minimum COR among all paths connecting the source-destination pair.}
        \end{lem}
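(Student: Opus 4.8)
The plan is to split the cumulative (penalized) weight of any admissible, cycle-free path into a geometric part and a connectivity-penalty part, and then to show that once $\delta$ is large enough the penalty part dominates every pairwise comparison of paths. For a path $\mathbf{p}$ let $n_h(\mathbf{p})$ denote the number of coverage-hole cells (cells with $m_{ij}=0$) entered along $\mathbf{p}$, and let $D(\mathbf{p})$ denote its geometric length, i.e.\ the sum of the base step costs $d_1$ and $d_2$ obtained from \eqref{e27} with the penalty $\delta$ removed. Because \eqref{e27} adds exactly $\delta$ to an edge precisely when its head node is a hole, the total weight splits cleanly as
\begin{equation}
  W(\mathbf{p}) = D(\mathbf{p}) + \delta\, n_h(\mathbf{p}).
\end{equation}
Since the source $\mathbf{s}$ is common to every candidate path, minimizing $n_h(\mathbf{p})$ is equivalent to minimizing the total number of holes along $\mathbf{p}$; combined with the monotone relationship stated just before the lemma (more holes yield a larger $COR$), a path attaining the minimum hole count is a minimum-$COR$ path. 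Hence it suffices to prove that the weight-minimizing path returned by the search attains $n_h^{\min} = \min_{\mathbf{p}} n_h(\mathbf{p})$ over all admissible paths.

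The core is an exchange argument. Suppose, for contradiction, that the returned path $\mathbf{p}^\star$ satisfies $n_h(\mathbf{p}^\star) > n_h^{\min}$, and let $\mathbf{q}$ be any admissible path with $n_h(\mathbf{q}) = n_h^{\min}$, so that $n_h(\mathbf{p}^\star) - n_h^{\min} \geq 1$. Using the splitting,
\begin{align}
  W(\mathbf{p}^\star) - W(\mathbf{q})
   &= \big[D(\mathbf{p}^\star) - D(\mathbf{q})\big] + \delta\big[n_h(\mathbf{p}^\star) - n_h^{\min}\big] \notag\\
   &\geq \delta - D(\mathbf{q}),
\end{align}
where I used $D(\mathbf{p}^\star)\geq 0$ and $n_h(\mathbf{p}^\star)-n_h^{\min}\geq 1$. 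Thus if $\delta > D(\mathbf{q})$ we obtain $W(\mathbf{p}^\star) > W(\mathbf{q})$, contradicting the optimality of $\mathbf{p}^\star$ for the modified weights. It therefore only remains to guarantee that $\delta$ exceeds the geometric length of some minimum-hole path.

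The key quantitative step — and the place I expect the real work to sit — is bounding $D(\mathbf{q})$ uniformly, and this is exactly where the cycle-free requirement imposed in this subsection is indispensable: without it the geometric length of a path is unbounded and no finite $\delta$ could work. With it, an admissible path visits each of the $N\times M$ grid cells at most once, hence uses at most $N\times M - 1$ edges; since every edge has base length at most $\max(d_1,d_2)$, we get $D(\mathbf{q}) \leq (N\times M - 1)\max(d_1,d_2)$. Choosing $\bar\delta$ to upper-bound this quantity then closes the argument. The stated threshold $\bar\delta = N\times M$ is the clean form of this bound under the natural normalization of the step costs; more conservatively one would take $\bar\delta = (N\times M)\max(d_1,d_2)$.

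The remaining subtlety I would check carefully is consistency of bookkeeping: the penalty in \eqref{e27} is attached to edge \emph{heads}, so $n_h(\mathbf{p})$ counts holes among $\mathbf{u}_1,\dots,\mathbf{u}_K$ and the fixed endpoints $\mathbf{s}$ and $\mathbf{d}$ contribute identically to every candidate path. This guarantees the comparison in the exchange step is clean and that ``minimum $n_h$'' coincides with ``minimum number of holes'' as used in the identification with $COR$. I would verify this directly from \eqref{e27}, after which the lemma follows from the two displays above.
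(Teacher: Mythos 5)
The paper itself omits the proof of this lemma (``omitted due to space limitation''), so there is no official argument to compare against; what follows is an assessment of your proposal on its own merits. Your core machinery is sound and is almost certainly the intended one: the decomposition $W(\mathbf{p}) = D(\mathbf{p}) + \delta\, n_h(\mathbf{p})$ follows directly from \eqref{e27} (the penalty attaches exactly when the node being entered is a hole, so with the cycle-free restriction the penalty term counts distinct hole cells visited after the source), and the exchange argument correctly shows that once $\delta$ exceeds the geometric length of some minimum-hole-count path, the weight-minimizing path attains the minimum hole count. Your bound $D(\mathbf{q}) \leqslant (NM-1)\max(d_1,d_2)$ via the simple-path property is also correct, and you were right to flag that the paper's stated threshold $\bar{\delta} = NM$ is only consistent with this bound if $\max(d_1,d_2) \leqslant NM/(NM-1)$; note this normalization actually fails for the paper's own choice $d_1=1$, $d_2=1.4$, so the clean threshold in the lemma is not covered by the argument without rescaling.

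The genuine gap is the bridge from ``minimum hole count'' to ``minimum COR.'' By \eqref{e7}, $COR_\mathbf{p}$ is a \emph{ratio}, normalized by the number of states $K+1$, and minimizing the numerator is not the same as minimizing the ratio: consider a wall of holes separating $\mathbf{s}$ from $\mathbf{d}$, so every path has hole count at least $1$. A direct crossing with $5$ states has $COR = 20\%$, while a long cycle-free detour that also crosses the wall once but visits $50$ states has $COR = 2\%$; your argument shows the algorithm returns the former (large $\delta$ minimizes the count and, among count-minimizers, the geometric length), which in fact \emph{maximizes} COR among minimum-count paths. The cycle-free restriction does not repair this, since simple paths between the same endpoints can still differ greatly in length. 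So the step ``a path attaining the minimum hole count is a minimum-$COR$ path,'' which you justify only by the informal monotonicity sentence preceding the lemma, does not hold under definition \eqref{e7}. To close the proof you must either read the lemma's ``minimum COR'' as ``minimum number of outage grid units'' (evidently the paper's intent, and arguably a looseness in the lemma statement itself) or supply an additional argument controlling the denominator $K+1$; as written, your proof establishes count-minimality, not COR-minimality.
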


        \begin{proof}
          \upshape{The proof is omitted due to space limitation.}
        \vspace{-0.1cm}
        \end{proof}

        \begin{lem}
          \vspace{-0.1cm}
          \upshape{Suppose that for $\delta_1$ and $\delta_2$ (with $\delta_1 > \delta_2$), the shortest path algorithm returns, respectively, path $\mathbf{p}_1$ and $\mathbf{p}_2$. We then have $COR_{\mathbf{p}_1} \leqslant COR_{\mathbf{p}_2}$, and the length of $\mathbf{p}_2$ is no longer than $\mathbf{p}_1$.}
        \end{lem}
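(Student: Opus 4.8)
The plan is to prove both claims simultaneously with a single interchange (exchange) argument, exploiting the fact that the weight rule \eqref{e27} makes the cumulative cost of any path \emph{affine} in the penalty $\delta$. First I would fix an arbitrary source-to-destination path $\mathbf{p}$ and record its total weight under penalty $\delta$ as
\begin{equation*}
W_\delta(\mathbf{p}) = L(\mathbf{p}) + \delta\, n(\mathbf{p}),
\end{equation*}
where $L(\mathbf{p}) = \sum_{k=1}^{K}\|\mathbf{u}_k - \mathbf{u}_{k-1}\|$ is the physical length obtained by summing the base costs ($d_1$ or $d_2$) over the edges, and $n(\mathbf{p})$ is the number of edges of $\mathbf{p}$ that terminate in an uncovered grid unit ($m_{ij}=0$). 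This decomposition is immediate from \eqref{e27}: each uncovered edge contributes its base cost \emph{plus} exactly one factor of $\delta$, while each covered edge contributes only its base cost, so the $\delta$-dependence collects into $\delta\,n(\mathbf{p})$.

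Next I would invoke optimality of the two returned paths over the common feasible set of acyclic source-to-destination paths. Since $\mathbf{p}_1$ minimizes $W_{\delta_1}$ and $\mathbf{p}_2$ minimizes $W_{\delta_2}$, we have $W_{\delta_1}(\mathbf{p}_1)\leqslant W_{\delta_1}(\mathbf{p}_2)$ and $W_{\delta_2}(\mathbf{p}_2)\leqslant W_{\delta_2}(\mathbf{p}_1)$. Substituting the decomposition and adding the two inequalities, the length terms cancel and I obtain
\begin{equation*}
(\delta_1 - \delta_2)\bigl(n(\mathbf{p}_1) - n(\mathbf{p}_2)\bigr) \leqslant 0 .
\end{equation*}
Because $\delta_1 > \delta_2$, this forces $n(\mathbf{p}_1)\leqslant n(\mathbf{p}_2)$, i.e., a heavier penalty never increases the number of uncovered units traversed. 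Feeding $n(\mathbf{p}_1)\leqslant n(\mathbf{p}_2)$ back into $W_{\delta_2}(\mathbf{p}_2)\leqslant W_{\delta_2}(\mathbf{p}_1)$ and using $\delta_2 \geqslant 0$ gives $L(\mathbf{p}_2)-L(\mathbf{p}_1)\leqslant \delta_2\bigl(n(\mathbf{p}_1)-n(\mathbf{p}_2)\bigr)\leqslant 0$, which is precisely the second (length) claim of the lemma.

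The last step is to convert the edge-count ordering $n(\mathbf{p}_1)\leqslant n(\mathbf{p}_2)$ into the COR ordering. Since the acyclic requirement ensures each grid unit is visited at most once, the number of uncovered units on any path equals $n(\mathbf{p})$ plus the fixed indicator of whether the \emph{common} source $\mathbf{s}$ is uncovered; hence the COR numerator $(K+1)\,COR_\mathbf{p}$ is monotone in $n(\mathbf{p})$, and $\mathbf{p}_1$ has no more uncovered units than $\mathbf{p}_2$. I expect the main obstacle to be the denominator: $COR$ in \eqref{e7} is a \emph{ratio}, and $\mathbf{p}_1$ and $\mathbf{p}_2$ may consist of different numbers of grid units ($K_1+1$ versus $K_2+1$), so a smaller numerator does not by itself guarantee a smaller ratio. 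To close this gap I would use the length ordering $L(\mathbf{p}_2)\leqslant L(\mathbf{p}_1)$ together with the octile geometry (each step has base cost $d_1$ or $d_2$, so the number of steps and the length are comparable up to the ratio $d_2/d_1$) to control the node counts and rule out the case in which the shorter path $\mathbf{p}_2$ would have fewer total units yet more uncovered units than $\mathbf{p}_1$; this reconciliation of numerator and denominator is the delicate part of the argument and the place where the structural (acyclic, grid-metric) assumptions must be used most carefully to conclude $COR_{\mathbf{p}_1}\leqslant COR_{\mathbf{p}_2}$.
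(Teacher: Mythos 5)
Since the paper replaces the proof of this lemma with ``omitted due to space limitation,'' there is no in-paper argument to compare against; your attempt has to stand on its own. Its first two stages do: the decomposition $W_\delta(\mathbf{p}) = L(\mathbf{p}) + \delta\, n(\mathbf{p})$ is an exact reading of \eqref{e27}, summing the two optimality inequalities correctly yields $(\delta_1-\delta_2)\bigl(n(\mathbf{p}_1)-n(\mathbf{p}_2)\bigr)\leqslant 0$ and hence $n(\mathbf{p}_1)\leqslant n(\mathbf{p}_2)$, and substituting this back with $\delta_2\geqslant 0$ gives $L(\mathbf{p}_2)\leqslant L(\mathbf{p}_1)$, which settles the length claim. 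This interchange argument is standard for parametrically penalized shortest paths and is almost surely the intended core of the omitted proof.

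The final step, however, is a genuine gap, and the octile-geometry patch you sketch cannot close it. To pass from $n(\mathbf{p}_1)\leqslant n(\mathbf{p}_2)$ to $\frac{n(\mathbf{p}_1)+s_0}{K_1+1}\leqslant \frac{n(\mathbf{p}_2)+s_0}{K_2+1}$ you would need $K_1\geqslant K_2$, but the only bound extractable from $L(\mathbf{p}_2)\leqslant L(\mathbf{p}_1)$ is $d_1 K_2 \leqslant L(\mathbf{p}_2)\leqslant L(\mathbf{p}_1)\leqslant d_2 K_1$, i.e., $K_2\leqslant (d_2/d_1)K_1$ --- the wrong direction, since a physically shorter path can use \emph{more} steps by trading diagonal moves for straight ones. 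Worse, the facts you have established are consistent with the opposite COR ordering, so no argument from them alone can finish. Concretely, take $d_1=1$, $d_2=1.4$, covered source $(0,1)$ and destination $(10,1)$; let $\mathbf{p}_1$ be the ten-step all-diagonal path via $(5,6)$ (so $K_1=10$, $L_1=14$) with $n_1=6$ uncovered interior nodes, and $\mathbf{p}_2$ the twelve-step all-straight path dipping through $(5,0)$ and $(10,0)$ (so $K_2=12$, $L_2=12$) with $n_2=7$ uncovered interior nodes. Their weights $14+6\delta$ and $12+7\delta$ cross at $\delta=2$, so $\mathbf{p}_1$ is the better of the two for $\delta_1>2$ and $\mathbf{p}_2$ for $\delta_2<2$; all of your inequalities hold ($n_1\leqslant n_2$, $L_2\leqslant L_1$), yet $COR_{\mathbf{p}_1}=6/11 > 7/13 = COR_{\mathbf{p}_2}$. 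The node-ratio definition \eqref{e7} simply interacts badly with the per-edge penalty \eqref{e27} when $d_1\neq d_2$. Your argument does yield the lemma under any of the following repairs: (i) equal step costs $d_1=d_2$, so $L=d_1K$ and $L_2\leqslant L_1$ really does force $K_2\leqslant K_1$; (ii) a length-proportional penalty together with COR measured as uncovered distance over total distance, in which case the same interchange gives $U_1\leqslant U_2$ and $L_2\leqslant L_1$, hence $U_1/L_1\leqslant U_2/L_2$ immediately; or (iii) weakening the first claim to the count statement $n(\mathbf{p}_1)\leqslant n(\mathbf{p}_2)$, which is exactly what you proved.
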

        \begin{proof}
        \vspace{-0.1cm}
          \upshape{The proof is omitted due to space limitation.}
          \vspace{-0.1cm}
        \end{proof}

        \vspace{-0.1cm}
        \looseness=-1
        Based on Lemmas 1 and 2, we are ready to propose the algorithm for $\mathcal{P}_3$, which proceeds as follows. Initially, we check the feasibility of $\mathcal{P}_3$ by setting $\delta = \bar{\delta} + 1$ and obtain a lower bound for the achievable COR value. If this lower bound does not satisfy the COR constraint, the algorithm fails to return a feasible solution to $\mathcal{P}_3$. Otherwise, we use a binary search to find the optimal $\delta ^\star$ efficiently. Then, we apply the A* shortest path algorithm on graph $G$ where some edge weights are increased by $\delta ^\star$, and the resulting path gives a heuristic solution to $\mathcal{P}_3$. The proposed algorithm is summarized in \textbf{Algorithm 2}.

    \subsection{Connectivity-Aware Path Planning Algorithm }
        In order to find a solution to the original problem $\mathcal{P}_1$ with both COD and COR constraints, we propose a new connectivity-aware path planning scheme, which integrates Algorithm 1 into Algorithm 2 as a subroutine. Specifically, we replace the A* search algorithm adopted in Algorithm 2 with our modified version. In this case, at each round with $\delta$ lessened to half in Algorithm 2, the updated path always satisfies the COD constraint, and thus the successfully returned path will have the desired connectivity quality as well as a minimized distance. Next, we analyze the computational complexity of the proposed UAV path planning algorithm.

        The running time of Algorithm 1 is the same as that of the A* search algorithm ($\mathcal{O} (NM \log (NM))$ in our case). The binary search in Algorithm 2 runs in $\mathcal{O} (\log  (NM))$ iterations. Realizing that Algorithm 1 is executed at each iteration in Algorithm 2, the total running time of our proposed path planning algorithm is given by $\mathcal{O} (NM \log ^2 (NM))$.

\section{Simulation Results}\label{s5}

    Fig. \ref{f4} provides an example of constructing an aerial coverage map via ray tracing simulations. The building geometry and GBS locations of the simulation scenario are shown in Fig. \ref{f4} (a), which corresponds to a section in Rosslyn, Virginia, USA, with a size of approximately $494\ m \times 507\ m$. The ray tracing software Wireless InSite$^\circledR$ is used to simulate the fine-grained signal propagation from GBSs to UAVs in this target area. Specifically, we uniformly placed receivers at $10200$ $(100 \times 102)$ locations on the UAV flight plane at an altitude of $110 \ m$, and then simulated the received signal power (RSP) from each GBS. The strongest simulated RSP values are shown in Fig. \ref{f4} (b). By setting an RSP threshold as $-52.6$ dBm to ensure coverage, we binarize the RSP map into a cellular coverage map, as shown in Fig. \ref{f4} (c). The obtained coverage map clearly illustrates the non-uniformity of the aerial coverage, affected by the building blockage. This indicates that careful path planning is critical for maintaining good communication quality, and the effect of environment/buildings should be carefully considered.

    Next, we use this coverage map to evaluate the performance of our proposed connectivity-aware UAV path planning algorithm. Suppose a UAV, located at $(4,17)$, needs to fly to a destination at coordinates $(92,94)$, with the COD and CDR constraints set as $d=3$ and $r=10\%$, respectively. We normalize the distance on the grid map for ease of illustration, i.e., $d_1 = 1$ and $d_ 2 = 1.4$ for \eqref{e16}.

    Fig. \ref{f5} illustrates the paths obtained by three different methods. For comparison, we consider two other UAV path planning schemes, where the naive shortest path algorithm pursues the minimal flying distance without considering connectivity constraints, and the coverage hole detour scheme returns a path without any coverage holes. It can be observed that our proposed connectivity-aware path planning approach  selects a path towards the destination in a smart way. Specifically, it does not deviate significantly from the naive shortest path, so as to shorten the UAV's mission path. Meanwhile, it avoids some unfavorable coverage holes to have the desired cellular connectivity quality. Table \ref{t1} shows the performance of three different paths in Fig. \ref{f5}. We see that our proposed algorithm achieves an up to $83.3\%$ decrease in the maximum COD and a significant reduction of $87.5\%$ in the COR, compared with the naive shortest path. Moreover, such a connectivity performance improvement is at a low expense of the UAV's flying distance. In particular, our proposed algorithm slightly increases the path length by $8.2\%$. In contrast, the coverage hole detour path, which avoids all coverage holes, increases the path length by $22.2\%$. This illustrates the effectiveness of the proposed connectivity-aware path planning.

    We run \textbf{Algorithm 1} with different COD constraints for the source-destination location pair of $(6, 28)$ and $(33, 93)$, and the results of the obtained path length versus the maximum COD on the path are shown by the red dots in Fig. \ref{f6} (a). Similarly, in Fig. \ref{f6} (b), we plot the path length versus the COR that results from setting various targets of the COR metric in \textbf{Algorithm 2}. Generally, the curves in Fig. \ref{f6}, which are fitted with the simulated values, illustrate that there is a tradeoff between the path length and the communication quality. This result is expected as the trajectories of UAVs usually become zigzag-like to avoid the coverage holes when a more stringent cellular connectivity requirement is enforced. Using our proposed connectivity-aware path planning algorithm, designers can adjust the connectivity constraints according to different application scenarios and achieve the desired tradeoff.

\begin{figure}[!t]
            \centering
            \includegraphics[height=2in]{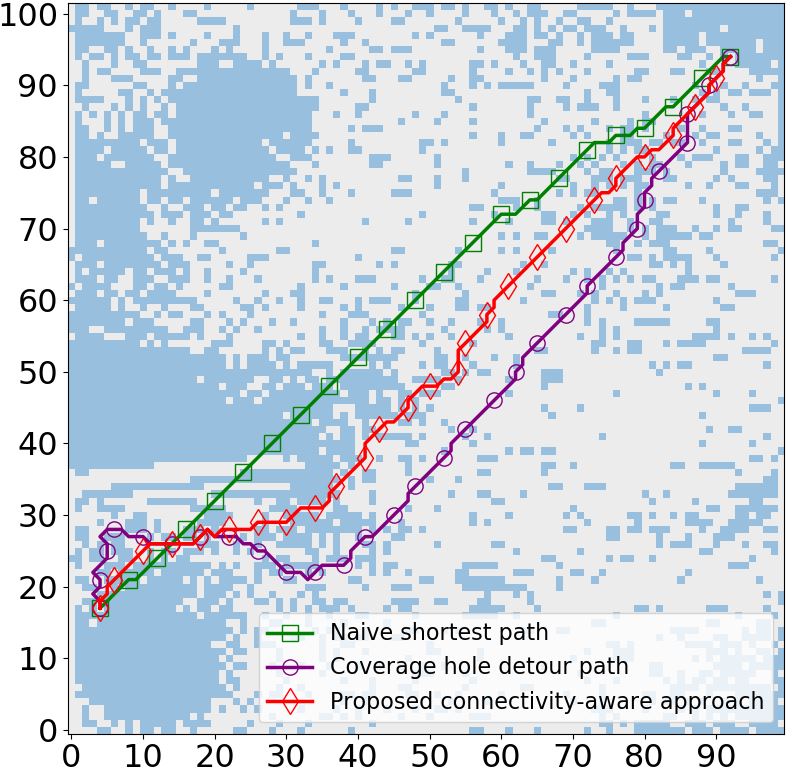}\\
            \caption{Illustration of UAV paths obtained by three different methods, with constraints on COD and COR set as 3 and 10\%, respectively.}
            \vspace{-0.3cm}
            \label{f5}
    \end{figure}

\begin{table}[!t] \scriptsize
    \vspace{-0.1cm}
     \caption{Performance assessment for paths in Fig. 4}
     \begin{center}
     \begin{tabular}{|c|c|c|c|}
          \hline
           & Path length & COR & Maximum COD\\
          \hline \hline
          Naive shortest path & 118.8 & $46.07\%$ & 16.8 \\
          Coverage hole detour path & 145.2 & 0 & 0 \\
          Proposed connectivity-aware path & 128.6 & $5.77\%$ & 2.8 \\
          \hline
     \end{tabular}
     \label{t1}
     \end{center}
     \vspace{-0.3cm}
     \end{table}

    \begin{figure}[!t]
        \centering
        \subfigure[]{
        \begin{minipage}[b]{0.225\textwidth}
            \centering
            \includegraphics[height=3.5cm]{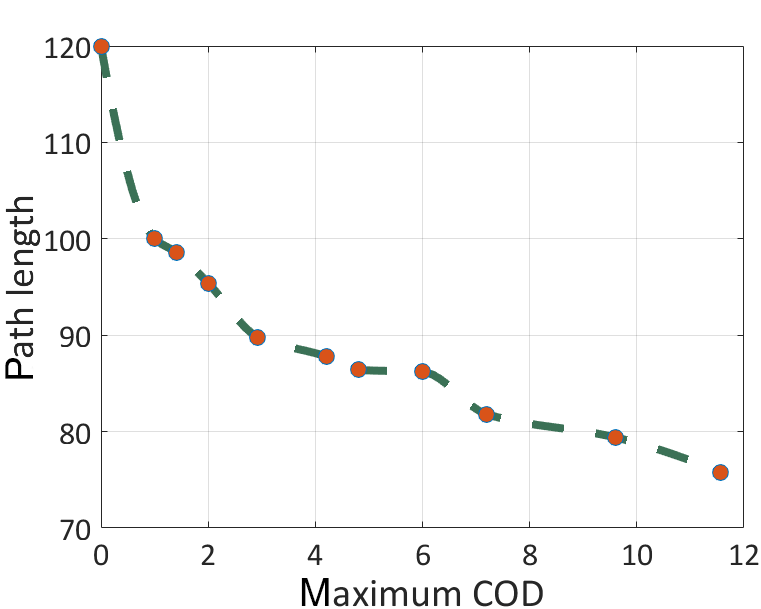}
        \end{minipage}
        }
        \subfigure[]{
        \begin{minipage}[b]{0.225\textwidth}
            \centering
            \includegraphics[height=3.5cm]{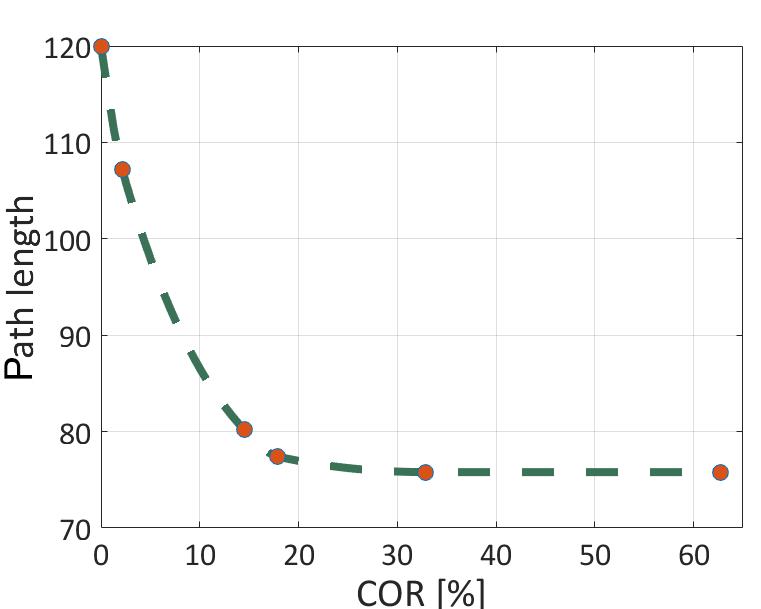}
            \end{minipage}
        }
        \caption{The tradeoff between the path length and communication quality. }
        \vspace{-0.3cm}
        \label{f6}
    \end{figure}

\section{Conclusions}\label{s6}

     In this paper, we investigated a connectivity-aware path planning problem for cellular-connected UAVs. Two new metrics were introduced to quantify the cellular connectivity quality of UAVs, which help to explicitly enforce connectivity constraints on the shortest path finding problem. As the formulated problem is NP-hard, we proposed a low-complexity path planning algorithm based on graph search methods. This study demonstrated the effectiveness of exploiting the controllable mobility of UAVs to satisfy the communication requirements, as well as the importance of considering realistic aerial coverage. For future investigations, it would be interesting to extend this work to online UAV path planning and consider dynamic aerial coverage.

\bibliographystyle{ieeetran}
\bibliography{reference}

\end{document}